\theoremstyle{plain}
\newtheorem{theorem}{Theorem}
\newtheorem{lemma}[theorem]{Lemma}
\newtheorem{proposition}[theorem]{Proposition}
\theoremstyle{definition}
\def\le{\leqslant}
\def\ge{\geqslant}
\def\Z{\mathbb{Z}}
\def\R{\mathbb{R}}
\def\com{\operatorname{Cof}}
\begin{document}
\title[]{Fair and efficient allocation of indivisible items under category constraints}
\author{Ayumi Igarashi and Fr\'ed\'eric Meunier}
\address{University of Tokyo, Tokyo, Japan}
\email{igarashi@mist.i.u-tokyo.ac.jp}
\address{\'Ecole Nationale des Ponts et Chauss\'ees, France}
\email{frederic.meunier@enpc.fr}

\keywords{Resource allocation; approximate envy-freeness; indivisible items}

\maketitle

\begin{abstract}
We study the problem of fairly allocating indivisible goods and chores under category constraints. Specifically, there are $n$ agents and $m$ indivisible items which are partitioned into categories with associated capacities. An allocation is considered feasible if each bundle satisfies the capacity constraints of its respective categories. For the case of two agents, Shoshan et al. (2023) recently developed a polynomial-time algorithm to find a Pareto-optimal allocation satisfying a relaxed version of envy-freeness, called EF$[1,1]$. Extending such guarantees beyond two agents has remained open.

We make progress toward this question by proving that for any number $n$ of agents, there always exists a Pareto-optimal allocation in which each agent can be made envy-free by reallocating at most $\min \{k+1,n\}(n-1)$ items. Moreover, when the number of agents is constant, we give a polynomial-time algorithm to compute such an allocation. Our results rely on a new application of the Knaster–Kuratowski–Mazurkiewicz (KKM) lemma to a simplex of agent weights, which may be of independent interest for fair division problems involving indivisible items.
\end{abstract}

\section{Introduction}\label{sec:intro}

\subsection{Fair and efficient allocations of indivisible items} 
Fair allocation of resources is a fundamental societal problem that has attracted significant attention from the economics literature. The challenge of balancing efficiency and fairness in resource allocation has been one of the central focuses in the literature. In the context of divisible resources, these seemingly conflicting objectives can, in fact, be compatible. For a broad class of utility functions, an allocation that satisfies both Pareto-optimality as an efficiency criterion and envy-freeness as a fairness criterion is guaranteed to exist~\citep{Varian1976}. Moreover, for linear non-negative utilities, such an allocation can be achieved by maximizing the \emph{Nash welfare}, i.e., the product of agents' utilities, as shown by \citet{eisenberg}.

In the allocation of indivisible resources---such as assigning time slots to employees or courses to students---envy-freeness is not always attainable. To address this limitation, \citet{budish2011combinatorial} introduced the notion of EF1, a relaxation of envy-freeness. A striking result, recently proven by \citet{CaragiannisKuMo19}, establishes a discrete analogue of Eisenberg and Gale: in the case of items yielding non-negative utilities (i.e., goods), an indivisible allocation that maximizes the Nash welfare satisfies both Pareto-optimality and EF1, provided that agents have additive utilities. Further, such an allocation can be found in pseudo-polynomial time~\citep{BarmanKrVa18}. 

In many real-world applications, allocations must account for constraints. 
A particularly relevant class of constraints is \emph{category constraints}, recently studied by \citet{DrorFeSe23} and \citet{ShoshanSeHa23}, where items are partitioned into disjoint categories, and each agent can receive at most a certain number of items from each category. 

As an illustration, consider the allocation of teaching duties among faculty members within a department, where courses are grouped into categories such as undergraduate, graduate, and summer-school courses. To ensure a balanced workload, a university may impose category constraints; for example, each faculty member may be assigned at most one undergraduate course and at most one graduate course per year.

Similar constraints arise in many other allocation scenarios. In the allocation of time slots across different activities, even if participants prefer multiple slots from the same activity, institutional rules often impose upper bounds on how many slots of each activity type a participant can receive. Likewise, in the allocation of medical resources among municipalities, resources may be divided into categories such as equipment, vehicles, and staff. Capacity and operational limitations then impose upper bounds on the number of resources from each category that a municipality can receive, naturally giving rise to category constraints.

The addition of such constraints increases the complexity of the allocation problem, making it more challenging to achieve both efficiency and fairness. \citet{ShoshanSeHa23} provided an example where no allocation satisfies both Pareto-optimality and EF1 under category constraints when agents have positive or negative utilities for the items. Nevertheless, they demonstrated that a feasible allocation satisfying Pareto-optimality and EF$[1,1]$ (a relaxation of EF1) exists and can be efficiently computed for two agents.

In this paper, we study fair allocation of indivisible items under category constraints. Our goal is to find a feasible allocation that is simultaneously efficient and approximately fair.  
Specifically, we build on the setting of Shoshan, Hazon, and Segal-Halevi. 
There are $n$ agents and $m$ items. Each agent $i \in [n]$ has a utility function $u_i \colon [m] \rightarrow \Z$ over the items. For a set $S$ of items, we write $u_i(S)=\sum_{j \in S}u_i(j)$. We note that our setting allows for the allocation of both goods and chores, meaning that agents may have positive or negative utilities for individual items.
An \emph{allocation} $A=(A_1,A_2,\ldots,A_n)$ is an ordered partition of the items, i.e., $A_i \cap A_{i'}=\varnothing$ for every distinct pair of agents $i,i'$ and  $\bigcup_{i \in [n]}A_i=[m]$. The items are moreover partitioned into categories $S_h$ associated with capacities $s_h$ for $h \in [k]$. An allocation $A$ is {\em feasible} if $|A_i \cap S_h| \le s_h$  for every agent $i$ and category $h$. We assume that
$|S_h| \le n s_h$ for every $h \in [k]$ to ensure the existence of at least one feasible allocation. 
An allocation~$A$ is \emph{Pareto-optimal} if it is feasible and if there is no feasible allocation $A'$ such that $u_i(A'_i) \ge u_i(A_i)$ for every agent $i$ and $u_{i'}(A'_{i'}) > u_{i'}(A_{i'})$ for at least one agent $i'$. An agent $i$ \emph{envies} another agent $i'$ if $u_i(A_{i'}) > u_i(A_i)$.

\subsection{Main results}
The case $n=2$ of the following theorem is the main result of Shoshan, Hazon, and Segal-Halevi. We generalize this result for all $n$.

\begin{theorem}\label{thm:main}
For every number $n$ of agents, there exist a subset of $\min \{k+1,n\} (n-1)$ items and a Pareto-optimal allocation such that for every agent, it is possible to reallocate some items in the subset so as to get a Pareto-optimal allocation making the agent non-envious.
\end{theorem}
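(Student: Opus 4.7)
The plan is to generalize the approach of Shoshan, Segal-Halevi, and Hazon by constructing, for a carefully chosen Pareto-optimal allocation $A^* = (A_1^*, \ldots, A_n^*)$, a family of witness items indexed by ordered pairs of distinct agents. I would first select $A^*$ to be extremal among all Pareto-optimal allocations in some refined sense---for instance, as a lex-maximum of the sorted utility vector, or as the maximizer of a weighted welfare $\sum_i \lambda_i u_i(A_i)$ with generically chosen positive weights $\lambda_i$. This extremality is intended to provide the structural rigidity needed downstream, and in particular to force the existence of suitable envy-reducing items in each bundle.

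Next, for every ordered pair $(i,j)$ of distinct agents, I would designate an item $x_{i,j} \in A_j^*$ that is a candidate to be reassigned to agent $i$ when we make $i$ non-envious. Concretely, $x_{i,j}$ can be chosen as a most preferred item of $i$ within $A_j^*$ that is compatible with the category capacities of $A_i^*$; a Hall-type matching argument on a bipartite graph between ``envies'' and candidate items should guarantee well-defined choices with the necessary disjointness. The resulting subset $T = \{x_{i,j} : i \neq j\}$ then has size at most $n(n-1)$.

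For each agent $i$, the allocation $A^{(i)}$ would be obtained by transferring the items $\{x_{i,j} : j \neq i\}$ from their original owners to agent $i$, possibly with a residual rebalancing of the other $n-1$ agents' bundles using only items of $T$. Feasibility of $A^{(i)}$ follows from the category-respecting choice of the $x_{i,j}$, and the non-envy of agent $i$ in $A^{(i)}$ would follow from the maximal-value choice of $x_{i,j}$ within each $A_j^*$, combined with the extremality of $A^*$.

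The main obstacle is verifying that $A^{(i)}$ remains Pareto-optimal. In the two-agent case only one item is moved and a local analysis suffices; for $n \geq 3$ we move $n-1$ items simultaneously, and we must rule out Pareto-improving trades involving the $n-1$ agents from whom items were taken. I expect this step to rely crucially on the extremality of $A^*$ via an exchange-graph argument: any Pareto-improvement over $A^{(i)}$ should be composable with the reverse transfers to yield a Pareto-improvement over $A^*$ itself, contradicting the choice of $A^*$. Making this composition precise while respecting the category constraints---and coordinating the choices $x_{i,j}$ so that all $n$ witness allocations $A^{(i)}$ simultaneously satisfy the three requirements from a common $T$---is the key technical hurdle I anticipate.
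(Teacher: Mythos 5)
Your proposal takes a fundamentally different route from the paper, and unfortunately the route has a structural gap that the paper's machinery is specifically designed to avoid.

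The paper does not fix a single extremal Pareto-optimal allocation $A^*$ and then locally repair envy. Instead, it considers the family of perturbed weighted-utilitarian LPs $\overline{\text{P}}(t)$ parameterized by weights $t$ in the simplex $\Delta^{n-1}$. Letting $C_i$ be the set of weights $t$ for which some optimal solution of the LP is envy-free for agent $i$, the KKM lemma produces a common weight $t^*$ lying in \emph{all} the $C_i$. At this $t^*$, the optimal face of the LP contains, for each $i$, an integral optimal solution $A^{(i)}$ that is envy-free for agent $i$; a dimension bound (Lemmas~\ref{lem:face-dimension}, \ref{lem:fixed}, combined with the generic perturbation in Proposition~\ref{prop:dimlp}) guarantees the face has dimension at most $n-1$, so all the $A^{(i)}$ agree on at least $m - n(n-1)$ items. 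Pareto-optimality of each $A^{(i)}$ is automatic because each is an optimal solution of a strictly positively weighted welfare objective (Lemma~\ref{lem:po}). The base allocation and the subset $T$ are read off from these $n$ allocations.

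The gap in your plan is that you choose a single welfare-maximizer $A^*$ with \emph{fixed} generic weights $\lambda$ and then try to make every agent non-envious by transferring at most $n-1$ items. There are two problems. First, there is no reason that for a fixed $\lambda$ the optimal allocation can be repaired for every agent by small exchanges: an agent with small $\lambda_i$ will typically have a bundle far worse than what she considers fair, and moving one item from each other bundle to her (while rebalancing to preserve category counts) need not eliminate her envy of some $A_j^*$. This is exactly why the paper varies $t$ over the whole simplex and needs the topological covering argument; the ``switching point'' phenomenon from the $n=2$ case (a one-dimensional sweep) generalizes to an $(n-1)$-dimensional covering problem, and no single weight vector chosen in advance will work. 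Second, even granting that some set of transfers makes agent $i$ non-envious, your exchange-graph argument for Pareto-optimality of $A^{(i)}$ would need to show that any Pareto-improvement of $A^{(i)}$ composes with the reverse transfers to a Pareto-improvement of $A^*$; this composition generally fails because the reverse transfer is not itself a Pareto-comparison, so the improvements do not chain. In the paper, Pareto-optimality of each witness allocation is immediate from the LP formulation and does not require any exchange argument. Without something playing the role of the KKM step, the claim that a single extremal $A^*$ admits $n$ simultaneous envy-fixing reallocations within a common small set $T$ remains unjustified.
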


An allocation $A$ satisfies \emph{envy-freeness up one good and one chore} (EF$[1,1]$) if for every pair of agents $i,i'$, there exist sets $S \subseteq A_i$ and $T \subseteq A_{i'}$ of size at most $1$ such that $u_i(A_i \setminus S) \ge u_i(A_{i'} \setminus T)$. Theorem~\ref{thm:main} implies the result of \cite{ShoshanSeHa23}, which states the existence of a Pareto-optimal and EF[1,1] allocation among two agents, as we explain now. We assume without loss of generality that $|S_h|=n s_h$ for every $h \in [k]$. Indeed, as noted by~\cite{ShoshanSeHa23}, this can be ensured by introducing sufficiently many dummy items for which each agent has utility $0$. Suppose that there are two Pareto-optimal allocations $A$ and $A'$ for two agents that can be transformed into each other by reallocating two items, and each agent is envy-free in one of the allocations, as ensured by Theorem~\ref{thm:main}. 
Since each agent receives exactly the same number $s_h$ of items from each category $h$ under $A$ and $A'$, these allocations can be obtained from each other by swapping two items. Namely, there exist two items $j,j'$ and two sets $I_1,I_2$ of items such that $A_1=I_1 \cup \{j\}$, $A_2 =I_2 \cup \{j'\}$, $A'_1=I_1 \cup \{j'\}$, and $A'_2 =I_2 \cup \{j\}$. By Pareto-optimality, at least one of the inequalities $u_1(I_1) \ge u_1 (I_2)$ and $u_2(I_2) \ge u_2(I_1)$ holds. Without loss of generality, assume that the first inequality holds. Then, it follows that the allocation $A$ or $A'$ that makes agent $2$ envy-free is both Pareto-optimal and EF$[1,1]$ for the two agents. 

We also extend the algorithm result of Shoshan, Hazon, and Segal-Halevi.

\begin{theorem}\label{thm:algo}
    The subset of items and the allocation of Theorem~\ref{thm:main} can be computed in polynomial time when $n$ is a constant.
\end{theorem}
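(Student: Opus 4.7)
The strategy is to lift the existence argument behind Theorem~\ref{thm:main} to an algorithm by exhaustive enumeration on the constant-sized pieces of data involved. Since $n$ is fixed, the reserved set $R$ has constant size $n(n-1)$, so the number of candidate subsets of $[m]$ of this size is $\binom{m}{n(n-1)} = O(m^{n(n-1)})$, polynomial in $m$; and for each such $R$, the number of ways to distribute its items among the $n$ agents is $n^{n(n-1)} = O(1)$. Hence I can afford to enumerate, in polynomial time, every candidate ``core'' consisting of $R$ together with guessed restrictions $A|_R = \rho$ and $A^{(i)}|_R = \rho^{(i)}$ for each $i \in [n]$.

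With the core fixed, the task reduces to finding a common allocation $\pi$ of $[m] \setminus R$ such that $(\rho, \pi)$ and each $(\rho^{(i)}, \pi)$ is feasible and Pareto-optimal, and such that $(\rho^{(i)}, \pi)$ makes agent~$i$ non-envious. Feasibility amounts to partition-matroid constraints (one per category) on $\pi$, obtained from the original capacities minus the contributions of $\rho$ and $\rho^{(i)}$, and envy-freeness unfolds into $n(n-1)$ linear inequalities on the assignment variables of $\pi$; both are easy to enforce. For Pareto-optimality, I would use the scalarization characterization: a feasible allocation is Pareto-optimal if and only if it maximizes $\sum_i \lambda_i u_i(A_i)$ for some $\lambda \in \R_{>0}^n$. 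Guessing weight vectors $\lambda, \lambda^{(1)}, \ldots, \lambda^{(n)}$ from a polynomial-size grid (with denominators bounded in terms of the utility bit-length), I reduce the search for $\pi$ to $n+1$ weighted-sum maximizations; because feasibility decouples by category, each such maximization splits into $k$ independent bipartite assignment problems, solvable in polynomial time via min-cost flow on a partition matroid. The algorithm outputs the first core together with weights whose corresponding optima agree on a common $\pi$ satisfying all the envy and Pareto conditions.

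\textbf{Main obstacle.} The delicate point is the Pareto-optimality certification: showing that a polynomial-size weight grid suffices to realize \emph{every} Pareto-optimal allocation that arises in the proof of Theorem~\ref{thm:main}, and that the $n+1$ separate weighted maximizations can be coordinated so that their optima actually coincide outside $R$. If the existence proof constructs $A$ and each $A^{(i)}$ as weighted-sum maximizers obtained from small local perturbations of a single reference weight vector $\lambda$, then the algorithm collapses to running a fixed number of weighted assignments, and the total runtime becomes $O(m^{n(n-1)}) \cdot \mathrm{poly}(m, \log U)$ where $U$ bounds the utility magnitudes---polynomial in $m$ for constant $n$. Otherwise one has to rely on the bit-complexity bound on Pareto vertices of the assignment polytope to bound the grid size, which is the technical heart of the argument I anticipate needing to formalize.
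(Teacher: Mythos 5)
Your high-level strategy (enumerate a constant-size core, then certify via weighted scalarization) points in a sensible direction, and you correctly flag the technical heart of the matter. But the ``main obstacle'' you defer is not a routine formalization --- it is essentially the whole proof, and two ideas you would need are absent. First, the bound $n(n-1)$ on $|R|$ is not automatic for weighted-utilitarian optima: for a generic weight $\lambda$, the optimal face of the LP relaxation of the constrained assignment problem can be high-dimensional, in which case optimal integral allocations may differ on far more than $n(n-1)$ items and no common $\pi$ exists for your guessed core. The paper handles this by perturbing the objective with explicit coefficients $\varepsilon_{ij}$ (Propositions~\ref{prop:dimlp} and~\ref{prop:comput-eps}) chosen so that \emph{for every weight simultaneously} the optimal face of~\eqref{prob-relax} has dimension at most $n-1$; only then does Lemma~\ref{lem:fixed} yield $m-n(n-1)$ fixed items. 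Your plan never explains why a weight with a narrow optimal face should exist, let alone one that realizes the core you guessed. Relatedly, the existence proof uses a \emph{single} weight $t^*$ at which all of $A^{(1)},\ldots,A^{(n)}$ lie on the same low-dimensional optimal face; guessing $n+1$ independent weight vectors $\lambda,\lambda^{(1)},\ldots,\lambda^{(n)}$ and demanding their optima coincide outside $R$ creates a coordination problem the paper simply avoids.

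Second, ``a polynomial-size weight grid with bounded denominators'' is not clearly adequate. The KKM point $t^*$ is the output of a topological fixed-point argument and carries no a priori bit-size bound; a grid would need to land in the exact combinatorial \emph{cell} of the cost-space arrangement containing $t^*$, not merely near $t^*$. The paper makes this precise by introducing the arrangement of hyperplanes $H_C$ indexed by elementary cycles of the auxiliary bipartite graph, proving (Lemma~\ref{lem:costs}) that optimal solutions depend only on the cell and are monotone under cell containment, and then enumerating the $O(m^{2n(n-1)})$ \emph{vertices} of that arrangement --- exactly the concrete polynomial family of weights your plan needs but does not construct. Note also that the paper's enumeration runs in the opposite order from yours: weights first, then read the (at most $n(n-1)$) unfixed items off the LP and brute-force them, rather than guessing $(R,\rho,\rho^{(i)})$ up front and then reverse-certifying it. The latter order requires deciding whether a proposed core is realizable by \emph{some} positive weight and common $\pi$, which is a much harder verification problem than the forward direction. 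A minor point: the ``iff'' scalarization of Pareto-optimality you invoke holds for fractional, not integral, allocations; only the ``if'' direction is used (and needed) here.
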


\subsection{Proof techniques}

Before explaining our proof idea, let us first describe the approach by \cite{ShoshanSeHa23}. 
In the case of two agents, \cite{ShoshanSeHa23} search for 
allocations that maximize $\sum_{i=1,2} t_i u_i(A_i)$ (weighted utilitarian optimal allocations) and update the weights along $t_1 \in (0,1)$, with $t_1 + t_2 = 1$.
If there are multiple optimal allocations, they break ties arbitrarily. 
Due to the classical characterization that fractional Pareto-optimal allocations are equivalent to weighted utilitarian optimal allocations with positive weights $t_i$~\citep{Negishi1960}, 
the method generates a sequence of Pareto-optimal allocations, where each consecutive allocation is obtained by swapping two items. Along the way, there is a switching point at which the envy-relation reverses direction, and this point can be interpreted as the desired allocation.

In the above approach, moving from one Pareto-optimal allocation to another induced by changes in the weights is reminiscent of the pivot steps in the ``shadow vertex'' version of the simplex algorithm.
This perspective was one of the starting points of our method, even though it may no longer be fully transparent in the final presentation. Nevertheless, we retain the core idea of relying on linear programming.

To generalize the result of \cite{ShoshanSeHa23} to the case of $n$ agents, we consider weighted utilitarian optimal allocations, where the weights lie now in the $(n-1)$-dimensional standard simplex $\Delta^{n-1}$. For weights on the boundary of the simplex, there exists an agent in the positive support who is envy-free at the corresponding optimal allocation. Then, applying the KKM (Knaster–Kuratowski–Mazurkiewicz) lemma, which is a set-covering variant of Brouwer's fixed-point theorem and Sperner's lemma, demonstrates the existence of a weight such that, for each agent $i$, there is an optimal allocation that makes agent $i$ envy-free.

However, a direct application of the above approach does not necessarily yield Theorem~\ref{thm:main}. One difficulty is that weighted utilitarian optimal allocations, where the weights lie on the boundary of the simplex, are not necessarily Pareto-optimal. Another issue is that the optimal solutions may form a high-dimensional object (actually, a polytope), which could require reallocating many items to transition from one optimal allocation to another for the given weight. To address these issues, we slightly shrink the simplex to ensure that all weights are positive, while preserving the covering condition of the KKM lemma. This is actually not done explicitly but with a carefully-chosen expression for the weights that mimics exactly a shrinking of $\Delta^{n-1}$. Further, we perturb our objective function to reduce the dimension of each optimal face, establishing Theorem~\ref{thm:main}. The number of optimal faces we consider is bounded by a polynomial in the input size when $n$ is a constant, yielding Theorem~\ref{thm:algo}. 

\subsection{ Related work}
  
For two agents, \citet{AzizCaIg22} designed an algorithm to find a Pareto-optimal and EF1 allocation of goods and chores, based on the classical Adjusted Winner algorithm~\citep{BramsTa96}. An allocation $A$ satisfies \emph{envy-freeness up to one item} (EF1) if for every pair of agents $i,i'$, there exists a set $S \subseteq A_i \cup A_{i'}$ of size at most $1$ such that $u_i(A_i \setminus S) \ge u_i(A_{i'} \setminus S)$.
\citet{ShoshanSeHa23} extended this result to a constrained setting, designing an algorithm to find a Pareto-optimal and EF$[1,1]$ allocation of goods and chores among two agents. Note that EF$[1,1]$ reduces to EF1 when all items are goods or when all items are chores. 
\citet{DrorFeSe23} considered the problem of finding an EF1 allocation under heterogeneous matroid constraints. They observed that unlike the unconstrained setting, maximizing the product of utilities may fail to satisfy EF1 even when two agents have binary additive utilities; see Example 3 in their paper. A similar example actually shows that maximizing Nash welfare may fail to achieve EF$k$, defined as an allocation where the envy is required to disappear after removing at most $k$ items; see the construction in the proof of Theorem~5 by \citet{Cookson2025}. 
\citet{Cookson2025} showed that maximizing Nash welfare satisfies $1/2$-EF1 and Pareto-optimality for non-negative utilities under any matroid constraints, where in this setting, an allocation $A$ satisfies $1/2$-EF1 if for every pair of agents $i,i'$, there exists a set $S \subseteq A_{i'}$ of size at most $1$ such that $u_i(A_i) \ge \frac{1}{2} u_i(A_{i'} \setminus S)$. \citet{bogomolnaia2024teams} considered a constrained setting in which each agent must receive exactly the same number of items, and proposed a fairness notion called EFf1 (envy-freeness up to the exchange of a single item), which requires that an agent does not envy another after exchanging a single item between their respective bundles. They examined the relationships between different fairness notions, showing that EF1 does not imply EFf1 and vice versa. They further established the compatibility of Pareto-optimality and EFf1 in several special cases, including the two-agent setting, instances with identical utilities, and instances with binary utilities. 
See \citet{Suksompong21} for other types of constraints relevant to fair division. 

\subsection{Independent and subsequent related work}
Since the initial release of the paper~\citep{arXiv}, several works have applied the same technique---namely, using the KKM lemma on a simplex of agent weights---to obtain desirable allocations. Independently of our work, \citet{barman2025mixedmanna} applied the KKM lemma to the allocation of goods and chores in the unconstrained setting, using perturbations of the utility functions to ensure an acyclic structure of optimal allocations. Similar to our work, \citet{barman2025mixedmanna} established the existence of Pareto-optimal allocations that can be made envy-free by reallocating a small number of items, as well as a polynomial-time algorithm to compute such allocations when the number of agents is a constant.
Building on this line of research, \citet{mahara2026chores}\footnote{Note that \citet{mahara2026chores} was developed independently of \citet{barman2025mixedmanna}.} established the existence of Pareto-optimal and EF1 allocations of indivisible chores, thereby resolving a major open question in fair division. His paper developed a procedure that converts allocations guaranteed by a KKM point into a Pareto-optimal and EF1 allocation by exploiting the dual program of weighted utilitarian maximization. \citet{barman2025proximately} further investigated a related approach in the mixed setting. 
We note that all of the above results focus on unconstrained settings, whereas our work considers a more general model with category constraints.

\section{Technical preliminaries of linear programming}\label{sec:prelim}
We begin by introducing some elementary facts from linear programming, along with the notions of a feasible basis and reduced costs. Complementary material about linear programming can be found in the many textbooks about linear programming, such as the one by \cite{matouvsek2007understanding}.
Consider a linear program in the standard form
\[
\begin{array}{rll}
  \text{\textup{maximize}}  & c \cdot x \\ \text{\text{\textup{s.t.}}} & Mx = b \\ & x \in \R_+^d \, ,
\end{array}
\]
where $M$ is assumed to be of full row rank. A {\em basis} $B$ is a subset of columns of $M$ such that the matrix $M_B$ is non-singular. (We use the standard notation of linear programming where a matrix with a subset of columns as a subscript is the matrix restricted to those columns. Similarly, a vector with a subset of entries as a subscript is the vector restricted to those entries.) It is {\em feasible} if $M_B^{-1}b \ge 0$. By a straightforward computation, for any basis $B$ and its complement $N$ in the set of columns, the objective function of the program can be equivalently written as $c_B^\top M_B^{-1}b + (c_N^\top - c_B^\top M_B^{-1}M_N)x_N$ when $x$ is a feasible solution. The components of $c_N^\top - c_B^\top M_B^{-1}M_N$ are the {\em reduced costs} associated with $B$. It is well-known that if the program admits an optimal solution, then there exists a feasible basis $B^*$ such that there is an optimal solution with support included in $B^*$ and such that the associated reduced costs are all non-positive. What is also well-known is that the set of optimal solutions forms a face of the set of feasible solutions, which is a polyhedron. There is actually a relation between the dimension of the optimal face and the reduced costs, as shown by the following lemma. This is ``common-knowledge'' in linear programming, but we provide a complete proof since we are not aware of a proper reference.

\begin{lemma}\label{lem:face-dimension}
    The dimension of the optimal face is at most the number of $0$'s among the reduced costs associated with $B^*$.
\end{lemma}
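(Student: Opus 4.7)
My plan is to exploit the reformulation of the objective function already given in the excerpt to identify the optimal face explicitly, and then to bound its dimension via the invertibility of $M_{B^*}$. Write $N$ for the complement of $B^*$ and let $\bar c_N^\top = c_N^\top - c_{B^*}^\top M_{B^*}^{-1} M_N$ denote the reduced-cost vector. For any feasible $x$ the excerpt shows
\[
c^\top x \;=\; c_{B^*}^\top M_{B^*}^{-1} b \;+\; \bar c_N^\top x_N.
\]
Since $B^*$ carries an optimal solution, the optimal value equals $c_{B^*}^\top M_{B^*}^{-1} b$, so the optimal face is exactly
\[
F \;=\; \{x \geq 0 : Mx = b,\; \bar c_N^\top x_N = 0\}.
\]

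Next I would use the sign information. By assumption $\bar c_j \leq 0$ for every $j \in N$, while $x_j \geq 0$, so the equation $\bar c_N^\top x_N = 0$ is equivalent to the coordinate-wise conditions $x_j = 0$ for every $j \in N$ with $\bar c_j < 0$. Letting $Z = \{j \in N : \bar c_j = 0\}$, the set $Z$ has cardinality equal to the number of zero reduced costs, and
\[
F \;=\; \{x \geq 0 : Mx = b,\; x_j = 0 \text{ for all } j \notin B^* \cup Z\}.
\]

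Finally I would bound $\dim F$ by projecting $F$ onto the coordinates indexed by $B^* \cup Z$. The remaining constraint $Mx = b$ reduces to $M_{B^*} x_{B^*} + M_Z x_Z = b$, and invertibility of $M_{B^*}$ yields $x_{B^*} = M_{B^*}^{-1}(b - M_Z x_Z)$. Thus $x_{B^*}$ is an affine function of $x_Z$, and $F$ is affinely isomorphic to a subset of $\R^{|Z|}$ (namely, the set of $x_Z \geq 0$ for which the induced $x_{B^*}$ is also nonnegative). Hence $\dim F \leq |Z|$, which is the claimed bound.

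Conceptually the argument is short, and I do not expect a real obstacle; the only place requiring care is the passage from the single scalar equation $\bar c_N^\top x_N = 0$ to the coordinate-wise vanishing conditions, since this step is what actually consumes the sign hypothesis on the reduced costs and pins down the support of optimal solutions. Once that is cleanly stated, the dimension bound is a direct consequence of the fact that the basic variables are uniquely determined by the non-basic ones through $M_{B^*}^{-1}$.
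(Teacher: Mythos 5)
Your proof is correct, and it takes a different (and arguably cleaner) route than the paper's. Both arguments hinge on the same complementary-slackness observation: since $\bar c_N \leq 0$ and $x_N \geq 0$ on the feasible region, $\bar c_N^\top x_N = 0$ forces $x_j = 0$ whenever $\bar c_j < 0$. But you use this to write the optimal face \emph{explicitly} as $F = \{x \geq 0 : Mx = b,\; x_j = 0 \text{ for } j \notin B^* \cup Z\}$, then bound $\dim F$ by the injective affine projection $x \mapsto x_Z$ (injective because $x_{B^*}$ is pinned down by $x_Z$ via $M_{B^*}^{-1}$). The paper, by contrast, never exhibits $F$ itself: it picks $q$ points $z^1,\ldots,z^q$ in the face with $z^i - y$ linearly independent, lets $H$ collect $B^*$ together with their supports, observes that these differences live in $\ker M_H$ so the rank-nullity theorem gives $|H| \geq |B^*| + q$, and then uses the same complementary-slackness step to show every index in $H \setminus B^*$ has zero reduced cost. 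Your version is more direct and yields the explicit description of $F$ as a by-product; the paper's version is a local perturbation/rank argument that avoids writing down $F$ but is somewhat longer. Either is a fully rigorous proof of the lemma.
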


\begin{proof}
    Denote by $q$ the dimension of the optimal face, and denote by $y$ the optimal solution whose support is included in $B^*$. There exist $q$ points $z^1,z^2,\ldots,z^q$ in the face such that the $z^i - y$ are linearly independent. Since $M_{B^*}$ is non-singular, each of those $z^i$ has at least one non-zero component in $N^*$ (complement of $B^*$): otherwise, $z_{B^*}^i$ would be a solution of $M_{B^*}z_{B^{*}}^i = b$. Denote by $H$ the subset of columns formed by 
    $B^*$ together with the supports of the $z^i$. The $z_H^i-y_H$ being linearly independent vectors in the kernel of $M_H$, the corank of this latter matrix is lower-bounded by $q$ and the rank-nullity theorem implies $|H| \ge |B^*| + q$. By the optimality of the $z^i$, the reduced costs indexed by elements in $H \setminus B^*$ are equal to $0$. Combining this with the previous inequality leads to the desired conclusion. 
\end{proof}

As explained in the introduction, keeping the dimension of the set of optimal solutions low is crucial, and Proposition~\ref{prop:dimlp} below ensures that this can be achieved by perturbing the objective function with arbitrarily small values $\varepsilon_{s}$.
(The proof of Proposition~\ref{prop:dimlp} actually shows that the set of optimal solutions of this relaxation is generically a polytope of low dimension with respect to the $\varepsilon_{s}$.) 

\begin{proposition}\label{prop:dimlp}
Let $c^1,c^2,\ldots,c^q$ be vectors in $\R^d$, and let $M$ be an $r \times d$ real matrix and $b$ be a vector of $\R^r$. Then we can choose arbitrarily small $\varepsilon_s>0$ for $s \in [d]$ such that the dimension of the optimal face of the linear program
\begin{equation}\label{eq:shrink}
\begin{array}{rll}
  \text{\textup{maximize}}  & c(t') \cdot x \\ \text{\text{\textup{s.t.}}} & Mx = b \\ & x \in \R_+^d \, ,
\end{array}
\end{equation}
is at most $q-1$ for every $t'$ in the $(q-1)$-dimensional standard simplex $\Delta^{q-1}$, where
$c_s(t') = \sum_{i=1}^q t'_i c^i_s + \varepsilon_s$. 
\end{proposition}

\begin{proof}
    Without loss of generality, we can assume $M$ to be of full row rank and that $d \ge q+r$ (since otherwise the set of feasible solutions itself is of dimension at most $q-1$). For every pair $(B,S)$ where $B$ is a feasible basis and $S$ a size-$q$ subset of $[d]\setminus B$, we introduce the following $q\times q$ matrix
    \[
    P(B,S) \coloneqq C_S - C_B M_B^{-1}M_S \, ,
    \]
    where $C$ is the $q\times d$ matrix whose $i$th row is the vector $c^{i}$. For each pair $(B,S)$ for which $P(B,S)$ is non-singular, set $\lambda(B,S) \coloneqq P(B,S)^{-1}\boldsymbol{1}$, where $\boldsymbol{1}$ is the all-one vector. For each pair $(B,S)$ for which $P(B,S)$ is singular, choose a non-zero vector $\lambda(B,S)$ in $\R^S$ such that $P(B,S)\lambda(B,S) = 0$.

    Sample now each $\varepsilon_s$ independently and uniformly at random on the open interval $(0,\alpha)$, for an arbitrary $\alpha >0$. Remark the following two facts:
    \begin{enumerate}[label=(\roman*)]
        \item\label{nonone} For the pairs $(B,S)$ for which $P(B,S)$ is non-singular, the relation
        $\lambda(B,S)^\top \bigl(M_S^\top M_B^{-1\top} \varepsilon_B - \varepsilon_S\bigl) \neq 1$ holds almost surely.
        \item\label{nonzero} For the pairs $(B,S)$ for which $P(B,S)$ is singular, the relation $\lambda(B,S)^\top \bigl(M_S^\top M_B^{-1\top} \varepsilon_B - \varepsilon_S\bigl) \neq 0$ holds almost surely.
    \end{enumerate}
    Indeed, at least one component of $\lambda(B,S)$ is non-zero and the corresponding component of $\varepsilon_S$ is drawn independently of the other components. So, from now on, we can assume that the $\varepsilon_s$ have been chosen so that the relations of items~\ref{nonone} and~\ref{nonzero} are satisfied for all pairs $(B,S)$.

    Let $t' \in \Delta^{q-1}$, and let $B^*$ be an optimal basis of the linear program. It can be chosen so that each reduced cost with respect to $B^*$ is non-positive. Suppose for a contradiction that the optimal face is of dimension at least $q$. According to Lemma~\ref{lem:face-dimension}, we can find $q$ reduced costs equal to $0$, i.e., a subset $\overline{S}$ of $q$ indices in $[d]\setminus B^*$ such that ${c_{\overline{S}}(t')}{^\top} = {c_{B^*}}(t'){^{\top}}M_{B^*}^{-1}M_{\overline{S}}$. By definition of $c(t')$, this equality writes
    \begin{equation}\label{eq:reducedsbar}
    P(B^*,\overline{S})^\top t' = M_{\overline{S}}^\top M_{B^*}^{-1\top} \varepsilon_{B^*} - \varepsilon_{\overline{S}} \, .\
    \end{equation}
    Multiplying by $\lambda(B^*,\overline{S})^\top$ this equality, we get a contradiction with the relations of items~\ref{nonone} and~\ref{nonzero} and above. (We use $\sum_{i=1}^qt'_i=1$ when $P(B^*,\overline{S})$ is non-singular.) 
\end{proof}

We emphasize that controlling the dimension of the optimal face uniformly with respect to all weights $t'_i$ is different from the standard problem of perturbing a linear program. In the usual perturbation setting, one modifies the objective or constraints to ensure uniqueness or non-degeneracy of an optimal solution. In contrast, here we aim to guarantee structural properties of the entire optimal face simultaneously for a family of weight vectors, which requires a different line of argument.

The proof of Proposition~\ref{prop:dimlp} is non-constructive since it relies on a probabilistic argument. It is actually not too difficult to make it constructive, as stated by Proposition~\ref{prop:comput-eps}, 
when the linear program \eqref{eq:shrink} has integer entries.

\begin{proposition}\label{prop:comput-eps}
    Consider the statement of Proposition~\ref{prop:dimlp}. Suppose that the entries of the $c^i$, $b$, and $M$ are integer. Given a rational number $\alpha$ of polynomial size, rational numbers $\varepsilon_s$ smaller than $\alpha$ and satisfying the conclusion of the proposition can be computed in polynomial time.
\end{proposition}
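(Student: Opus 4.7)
The plan is to replace the probabilistic argument in the proof of Proposition~\ref{prop:dimlp} by the explicit one-parameter substitution $\varepsilon_s \coloneqq \beta^s$ for a single rational $\beta > 0$ to be chosen, and to use the integrality of $M$, $b$, and the $c^i$ to produce such a $\beta$ of polynomial bit-size that simultaneously avoids the forbidden relations~\ref{nonone} and~\ref{nonzero}. Under the moment-curve substitution, each such forbidden relation, originally a linear equation in $(\varepsilon_1,\ldots,\varepsilon_d)$, becomes the vanishing of a univariate polynomial in $\beta$; so the problem collapses from avoiding exponentially many hyperplanes in $\mathbb{R}^d$ to avoiding the positive real roots of an exponentially large family of univariate polynomials.

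The first technical step is to bound the bit-complexity of the data defining these polynomials. For every pair $(B,S)$, the vector $\lambda(B,S)$ can be chosen to be rational with common denominator dividing $|\det P(B,S)|$ in the non-singular case (by Cramer's rule) or an appropriate minor of $P(B,S)$ in the singular case (by Gaussian elimination applied to the kernel). Hadamard's inequality then gives numerators and denominators of magnitude at most $2^{\mathrm{poly}}$, where $\mathrm{poly}$ denotes a fixed polynomial in the bit-size of the input. After clearing a common denominator $D$, each forbidden equation $\lambda(B,S)^\top(M_S^\top M_B^{-1\top}\varepsilon_B - \varepsilon_S) = c$, with $c \in \{0,1\}$, turns into a univariate equation $Q_{B,S,c}(\beta) = 0$ with integer coefficients of magnitude at most $2^{\mathrm{poly}}$. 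Crucially, $Q_{B,S,c}$ is not identically zero: as in the proof of Proposition~\ref{prop:dimlp}, at least one component of $\lambda(B,S)$ is non-zero, hence so is at least one coefficient of $\varepsilon_S$ in the underlying linear form.

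The second step is to invoke the classical Cauchy lower bound on the smallest positive root of a non-zero integer polynomial. For $c = 1$ the constant term of $Q_{B,S,1}$ is the non-zero integer $-D$, so its smallest positive root is at least $|D|/(|D|+\max_s|a_s|) \geq 2^{-N}$ for some polynomial-size integer $N$ depending only on the bit-size of the input. For $c = 0$ the constant term vanishes, but factoring out the largest power of $\beta$ yields a polynomial with a non-zero integer constant term, and the same bound applies with the same $N$. I would then set $\beta \coloneqq \min(\alpha/2,\, 2^{-N-1})$, a positive rational of polynomial bit-size, and output $\varepsilon_s \coloneqq \beta^s$ for $s \in [d]$: each $Q_{B,S,c}(\beta)$ has the sign of its lowest-order non-zero coefficient (an integer of absolute value at least $1$), hence is non-zero, so every forbidden relation is avoided; the inequalities $\varepsilon_s < \alpha$ follow from $\beta < 1$ and $\beta < \alpha$.

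The main obstacle I anticipate is the requirement that the coefficient bound, and hence the root bound, be \emph{uniform} across the possibly exponentially many pairs $(B,S)$. This is precisely what forces the moment-curve substitution: it reduces a high-dimensional hyperplane-avoidance problem to a one-dimensional polynomial-root-avoidance problem, in which the integrality of the input yields a single $2^{-\mathrm{poly}}$ safety radius around $\beta = 0$ that works simultaneously for all forbidden relations, without ever enumerating the pairs $(B,S)$.
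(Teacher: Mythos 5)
Your proposal is correct and follows essentially the same route as the paper: both substitute $\varepsilon_s = \beta^s$ (the moment-curve ansatz) to collapse the exponential family of forbidden hyperplanes into a family of univariate integer polynomials, bound the coefficients uniformly via determinant/Hadamard-type estimates, and then pick $\beta$ of polynomial bit-size below the uniform root-isolation radius. The paper makes the coefficient bound explicit via the cofactor-matrix construction of $\mu(B,S)$ and the quantity $\theta'$, and uses its own Lemma~\ref{lem:polynom} in place of your reciprocal-polynomial Cauchy bound, but these are the same idea dressed differently.
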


The proof of Proposition~\ref{prop:comput-eps} uses the following easy lemma.

\begin{lemma}\label{lem:polynom}
    Let $P(X) = \sum_{j=0}^d a_j X^j $ be a non-zero polynomial of degree at most $d \ge 1$ in $\R[X]$, and let $\rho,\theta$ be positive numbers such that $\rho \le |a_j| \le \theta$ when $a_j \neq 0$. Then $P$ has no root in $(-\frac \rho {d\theta},\frac \rho {d\theta})\setminus\{0\}$.
\end{lemma}

\begin{proof}
    Let $x_0$ be a root of $P$ such that $x_0 \neq 0$. Denote by $j^*$ the smallest index such that $a_{j^*} \neq 0$. The relation $x_0 \neq 0$ implies $j^*\neq d$. Since $P(x_0)=0$, we have $|a_{j^*}| |x_0|^{j^*}\le \sum_{j=j^*+1}^d|a_j||x_0|^j$. If $|x_0| \ge 1$, then obviously $|x_0| \ge \frac \rho {d\theta}$. If $|x_0| < 1$, then $\rho |x_0|^{j^*} \le (d-j^*)\theta |x_0|^{j^*+1}$, which implies in turn that $|x_0| \ge \frac \rho {d\theta}$ again since $x_0 \neq 0$. 
\end{proof}

\begin{proof}[Proof of Proposition~\ref{prop:comput-eps}]
    Without loss of generality, we assume that $\alpha < 1$. We keep the same notation as in the proof of Proposition~\ref{prop:dimlp}. Remember in particular that $B$ and $S$ are disjoint subsets of $[d]$, with $B$ of size $r$ and $S$ of size $q$. Moreover, we set $Q(B,S) \coloneqq \det(M_B) P(B,S)$ and denote by $\|C\|_{\infty}$ and $\|M\|_{\infty}$ the maximum of the absolute values of the entries of respectively $C$ and $M$. (In case $C=0$---which is anyway an obvious case---we implicitly define $\|C\|_{\infty}=1$.) Notice that the matrix $Q(B,S)$ has integer entries, whose absolute values are upper bounded by $(r+1)! \, \|C\|_{\infty} \, \|M\|_{\infty}^r$.

    The goal is to build explicitly $\lambda(B,S)$ as in the proof of Proposition~\ref{prop:dimlp} and an $\varepsilon \in \R_+^d$ such that the following two properties hold:
    \begin{enumerate}[label=(\alph*)]
        \item\label{norm} $\|\varepsilon\|_{\infty} < \alpha$,
        \item\label{non-int} $\lambda(B,S)^\top \bigl(M_S^\top M_B^{-1\top} \varepsilon_B - \varepsilon_S\bigl)$ is not $1$ when $P(B,S)$ is non-singular, and is not $0$ when $P(B,S)$ is singular.
    \end{enumerate}
    With such $\lambda(B,S)$ and $\varepsilon$, there cannot be any $\overline{S}$ of size $q$ satisfying~\eqref{eq:reducedsbar}, which translates into the fact that the optimal face has dimension at most $q-1$.

  For each pair $(B,S)$ (as in the proof of Proposition~\ref{prop:dimlp}) for which $P(B,S)$ is non-singular, set $\mu(B,S) \coloneqq \bigl(\com(Q(B,S))\bigl)^\top \boldsymbol{1}$, where the operator $\com$ applied to a matrix returns its cofactor. For each pair $(B,S)$ for which $P(B,S)$ is singular, proceeds as follows: choose a non-singular square submatrix $Q'(B,S)$ of $Q(B,S)$ of maximal rank; denote respectively by $R \subset [q]$ and $S' \subset S$ the rows and columns chosen this way from $Q(B,S)$; let $f \in \Z^R$ be the vector obtained by summing the columns of $Q(B,S)$ in $S \setminus S'$, and by keeping only the entries in $R$; set $\mu(B,S) \in \R^S$ to be
 the vector $-\bigl(\com(Q'(B,S))\bigl)^\top f$ completed with entries equal to $\det(Q'(B,S))$. Notice that the absolute values of the entries of $\mu(B,S)$ are upper bounded by
 \[
 q! \, \bigl((r+1)! \, \|C\|_{\infty} \, \|M\|_{\infty}^r \bigl)^{q-1} = q! \, \bigl((r+1)!\bigl)^{q-1} \, \|C\|_{\infty}^{q-1} \, \|M\|_{\infty}^{r(q-1)}\, .
 \]
When $P(B,S)$ is non-singular, set $\lambda(B,S) \coloneqq \frac{\det(M_B)}{\det(Q(B,S))}\mu(B,S)$. In this case, we have $\lambda(B,S) =P(B,S)^{-1}\boldsymbol{1}$ and
    \[
\lambda(B,S)^\top \bigl(M_S^\top M_B^{-1\top} \varepsilon_B - \varepsilon_S\bigl) \neq 1 \quad \Longleftrightarrow \quad
\det(M_B) \mu(B,S)^\top (M_S^\top M_B^{-1\top}\varepsilon_B - \varepsilon_S\bigl) \neq \det(Q(B,S)) \, .
\]
When $P(B,S)$ is singular, set $\lambda(B,S) \coloneqq \det(M_B)\mu(B,S)$. In this case, we have $\lambda(B,S)$ non-zero and
    \[
\lambda(B,S)^\top \bigl(M_S^\top M_B^{-1\top} \varepsilon_B - \varepsilon_S\bigl) \neq 0 \quad \Longleftrightarrow \quad
\det(M_B) \mu(B,S)^\top (M_S^\top M_B^{-1\top}\varepsilon_B - \varepsilon_S\bigl) \neq 0 \, .
\]
Let us check that in this case we have in addition $P(B,S) \lambda(B,S) = 0$. The entries in $R$ of $Q(B,S)\mu(B,S)$ are by construction equal to $-Q'(B,S)\bigl(\com(Q'(B,S))\bigl)^\top f + \det(Q'(B,S))f = 0$. Since the rank of the rows of $Q(B,S)$ in $R$ is the rank of $Q(B,S)$, every row not in $R$ is a linear combination of rows in $R$. This implies $Q(B,S)\mu(B,S) = 0$, and hence $P(B,S) \lambda(B,S) = 0$.

Write $\det(M_B) \mu(B,S)^\top (M_S^\top M_B^{-1\top}\varepsilon_B - \varepsilon_S\bigl)$ as $\sum_{s \in B \cup S} a_s \varepsilon_s$. The $a_s$ are integers and we have $\|a\|_{\infty} \le \theta'$, where we have chosen (not tight)
\[
\theta' \coloneqq (q+1)!\, \bigl((r+1)!\bigl)^q\, \|C\|_{\infty}^q\, \|M\|_{\infty}^{rq} \, .
\]
By Lemma~\ref{lem:polynom} with $\rho=1$, and $\theta = \max(2, \theta')$, setting $\varepsilon_s = \left(\frac {\alpha} {d\theta} \right)^s$ ensures that $\varepsilon$ satisfies the properties~\ref{norm} and~\ref{non-int}. (When $P(B,S)$ is non-singular, we use Lemma~\ref{lem:polynom} with $a_0 \coloneqq - \det(Q(B,S))$.) The number $\theta'$ is computable in polynomial time and so are the $\varepsilon_s$'s. 
\end{proof}

\section{Proof of Theorem~\ref{thm:main}}\label{sec:main}




We assume without loss of generality that $|S_h| = ns_h$ for every $h \in [k]$. Indeed, as noted right after the statement of Theorem~\ref{thm:main}, this can be safely done by adding dummy items. Let $L \coloneqq n(\sum_{i=1}^n\sum_{j=1}^m|u_i(j)|+1)$ and $\varepsilon_{ij} >0$ for all $i \in [n]$ and $j \in [m]$. 
For each point $(t_1,t_2,\ldots,t_n)$ of the $(n-1)$-dimensional standard simplex $\Delta^{n-1}$, we consider the problem 
\begin{equation}\label{prob}\tag{P$(t)$}
\begin{array}{rl}
  \text{maximize}   & \displaystyle{\frac 1 L \sum^n_{i=1}\bigl(1 + (L- n)t_i\bigl) u_{i}(A_i) + \sum_{i=1}^n\varepsilon_i(A_i)} \\[3ex]
   \text{s.t.}  & A \text{ is a feasible allocation,}
\end{array}
\end{equation}
where we use the notation $\varepsilon_i(S) = \sum_{j\in S}\varepsilon_{ij}$ for any subset $S$ of $[m]$.

We comment on the way~\eqref{prob} is written to provide some intuition or motivation. The first term the objective function is a weighted combination of the utilities of the agents: each agent $i$ gets a weight $t'_i \coloneqq \frac 1 L \bigl(1 + (L- n)t_i\bigl)$. These weights are non-negative and sum up to $1$ but never reach the value $0$: in other words, they are lying in the strict interior of $\Delta^{n-1}$, and actually in a shrunk copy of it. This is required to ensure that optimal solutions of~\eqref{prob} are Pareto-optimal; see the proof of Lemma~\ref{lem:po}. The second term of the objective is a perturbation that allows the application of Proposition~\ref{prop:dimlp}: with this proposition, we make sure that the dimension of the optimal faces of~\eqref{prob} is not too high and that we can apply Lemma~\ref{lem:fixed} below, which in turn controls the number of items common to all optimal allocations.

We also consider the linear relaxation of~\eqref{prob}. To ease its writing, we introduce the following bipartite graph $G=(V,E)$. The vertices on one side are all pairs $(i,h)$, where $i$ is an agent and $h$ is a category. The vertices on the other side are the items. We put an edge between a pair $(i,h)$ and an item $j$ if $j$ belongs to the category $h$. The linear relaxation of~\eqref{prob} can then be written as:
\begin{equation}\label{prob-relax}\tag{$\overline{\text{P}}(t)$}
\begin{array}{rll}
  \text{maximize}   &  \displaystyle{\frac 1 L \sum_{e=(i,h)j \in E}\bigl(1 + (L- n)t_i\bigl) u_{i}(j)x_e + \sum_{e =(i,h)j\in E}\varepsilon_{ij}x_e}\\[3ex]
   \text{s.t.}  & \displaystyle{\sum_{e \in \delta(j)} x_e = 1} & \forall j \in [m] \\[1.2ex]
   & \displaystyle{\sum_{e \in \delta(i,h)} x_e = s_h} & \forall i \in [n], h \in [k] \\[1.2ex]
   & x_e \ge 0 & \forall e \in E\, .
\end{array}
\end{equation}
Since the constraint matrix of~\eqref{prob-relax} is totally unimodular, every optimal solution of~\eqref{prob} is also optimal for~\eqref{prob-relax}.

We sketch now the proof of Theorem~\ref{thm:main} and the main lemmas to which it relies. We finish this subsection with the proof of Theorem~\ref{thm:main} with full detail. 

The proof of Theorem~\ref{thm:main} relies on the KKM (Knaster--Kuratowski--Mazurkiewicz) lemma~\citep{knaster1929beweis}. 

\begin{lemma}[KKM lemma]\label{lem:KKM}
Consider a collection $C_1,C_2,\ldots,C_{d+1}$ of closed subsets of the $d$-dimensional standard simplex $\Delta^d$ such that every $(t_1,t_2,\ldots,t_{d+1})$ in $\Delta^{d}$ belongs to $C_i$ for some $i$ with $t_i > 0$. Then there exists a point of $\Delta^d$ belonging to all $C_i$ simultaneously.
\end{lemma}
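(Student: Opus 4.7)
The plan is to derive the KKM lemma from Sperner's lemma via a compactness argument, which is the classical route. First, I would fix a sequence of triangulations $\mathcal{T}_k$ of $\Delta^d$ whose mesh (the maximum diameter of a face) tends to $0$ as $k \to \infty$; iterated barycentric subdivisions do the job. For each vertex $v$ of $\mathcal{T}_k$, the hypothesis of the lemma guarantees at least one index $i \in [d+1]$ with $v \in C_i$ and $v_i > 0$, and I would define a labeling $\ell_k(v) := i$ by picking any such index.

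The crucial observation is that this is a Sperner labeling: if the support of $v$ is $\{j_1,\ldots,j_s\} \subseteq [d+1]$, then the requirement $v_{\ell_k(v)} > 0$ forces $\ell_k(v) \in \{j_1,\ldots,j_s\}$, which is precisely the boundary condition of a Sperner labeling on the face of $\Delta^d$ indexed by $\{j_1,\ldots,j_s\}$. Sperner's lemma then delivers, for each $k$, a fully-labeled simplex $\sigma_k \in \mathcal{T}_k$, meaning a small simplex whose $d+1$ vertices carry all the labels $1,2,\ldots,d+1$.

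To finish, for each $i \in [d+1]$ let $v_i^{(k)}$ denote a vertex of $\sigma_k$ with label $i$; by construction, $v_i^{(k)} \in C_i$. By compactness of $\Delta^d$ and $d+1$ successive extractions of subsequences, I may assume that $v_i^{(k)} \to p_i$ for every $i$. Because the diameter of $\sigma_k$ tends to $0$ with the mesh of $\mathcal{T}_k$, all the $p_i$ coincide in a single point $p \in \Delta^d$. The closedness of each $C_i$ then forces $p \in C_i$ for every $i$, which is the desired common point.

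The main obstacle, such as it is, lies in correctly verifying the Sperner boundary condition on every face of $\Delta^d$ and in invoking Sperner's lemma, which the paper does not itself introduce; once these are in place, the limiting argument is entirely routine.
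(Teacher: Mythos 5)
Your proof is correct. Note, though, that the paper does not supply a proof of this lemma at all: it cites the classical result of Knaster--Kuratowski--Mazurkiewicz and uses it as a black box, so there is no in-paper argument to compare against. The route you take---labeling vertices of a sequence of fine triangulations by a valid KKM index, observing that the condition $v_{\ell_k(v)}>0$ is exactly the Sperner boundary condition, extracting a fully-labeled cell from Sperner's lemma, and passing to the limit using compactness and closedness of the $C_i$---is the standard derivation of KKM from Sperner. One small remark: the hypothesis in the paper's statement (``for some $i$ with $t_i>0$'') is phrased pointwise rather than as the usual face-covering condition $F_I\subseteq\bigcup_{i\in I}C_i$, but the two are equivalent, and your reduction to the Sperner boundary condition handles this correctly, since a label $i$ with $v_i>0$ automatically lies in the index set of every face containing $v$.
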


The KKM lemma is applied with $d=n-1$. The $C_i$ are defined with the help of optimal solutions of~\eqref{prob}: A point $(t_1,t_2,\ldots,t_n)$ belongs to $C_i$ if there exists an optimal solution of (\hyperref[prob]{P$({t})$}) that is envy-free for agent $i$. The fact that they satisfy the condition of the KKM lemma is a consequence of the following lemma.

\begin{lemma}\label{lem:ef}
    If $\varepsilon_{ij} < \frac 1 {Ln^2m}$ for all $i,j$, then every optimal solution of \eqref{prob} is envy-free for some agent $i^*$ with $t_{i^*}>0$.
\end{lemma}
\begin{proof}
    Consider an arbitrary optimal solution $A^*$ of (\hyperref[prob]{P$({t})$}) and assume for a contradiction that no agent $i$ with positive $t_i$ is envy-free. Define the {\em envy-graph} of $A^*$ whose vertices are the agents and where there is an arc from $i$ to $i'$ whenever agent $i$ envies agent $i'$. Since $A^*$ is Pareto-optimal (by Lemma~\ref{lem:po}), this graph does not admit a directed cycle. Pick $i_1$ with $t_{i_1} \ge 1 /n$. Take a directed path from $i_1$ to some sink of the envy-graph and denote by $i_1,i_2,\ldots,i_q$ the vertices of this path. Note that by assumption $t_{i_q}$ is $0$. Define a new allocation $A'$ as follows: 
\[
A'_i = 
\begin{cases}
A^*_{i_{\ell+1}}
& \text{if $i = i_{\ell}$,} \\
A_i^* & \text{otherwise.}
\end{cases} \,
\]
where $i_{q+1} = i_1$. The value of the objective function for $A'$ minus that for $A^*$ is equal to
\begin{equation}\label{eq:diff}
        \frac 1 L \sum^n_{i=1}\Bigl(1 + (L- n)t_i\Bigl) \Bigl(u_{i}(A'_i) - u_i(A^*_i)\Bigl) + \sum_{i=1}^n\left(\varepsilon_i(A'_i) - \varepsilon_i(A_i^*)\right) \, .
\end{equation}
The left-hand term can be expressed as
    \[
        \frac 1 L \sum_{\ell=1}^{q-1}\Bigl(1 + (L- n)t_{i_{\ell}}\Bigl) \Bigl(u_{i_{\ell}}(A^*_{i_{\ell+1}}) - u_{i_{\ell}}(A^*_{i_{\ell}})\Bigl) + \frac 1 L \Bigl(u_{i_q}(A^*_{i_1}) - u_{i_q}(A^*_{i_q})\Bigl)\, .
      \]  

Since the utilities are integers and agent $i_{\ell}$ envies agent $i_{\ell+1}$ when $\ell < q$, this left-hand term is lower-bounded by
\[
    \frac 1 L \sum_{\ell=1}^{q-1}\Bigl(1 + (L- n)t_{i_{\ell}}\Bigl) - \frac 1 L \frac {L-1} n   \ge  
        \frac 1 L \left(1 + (L- n) \frac 1 n \right) - \frac 1 L \frac {L-1} n = \frac 1 {L n} \, .
\]
All in all, the difference~\eqref{eq:diff} is larger than $0$, contradicting the optimaliy of $A^*$. 
\end{proof}

The point belonging to all $C_i$ simultaneously, whose existence is a consequence of the KKM lemma, translates into the existence of allocations $A^{(1)},A^{(2)},\ldots,A^{(n)}$, each being envy-free for a different agent. Their Pareto-optimality is a consequence of the following lemma.


\begin{lemma}\label{lem:po}
    If $\varepsilon_{ij} < \frac 1 {Lnm}$ for all $i,j$, then every optimal solution of \eqref{prob} is Pareto-optimal.
\end{lemma}

\begin{proof}
    Consider an optimal solution $A^*$, and assume for contradiction it is not Pareto-optimal. Let $A'$ be some allocation that Pareto-dominates $A^*$. The value of the objective function for $A'$ minus that for $A^*$ is equal to
    \[
        \frac 1 L \sum^n_{i=1}\Bigl(1 + (L- n)t_i\Bigl) \Bigl(u_{i}(A'_i) - u_i(A^*_i)\Bigl) + \sum_{i=1}^n\left(\varepsilon_i(A'_i) - \varepsilon_i(A_i^*)\right) \, .
      \]  
      Since the utilities are integers, this quantity is lower bounded by 
      \[
      \frac 1 L + \sum_{i=1}^n\left(\varepsilon_i(A'_i) - \varepsilon_i(A_i^*)\right) > 0 \, ,
      \]
      contradicting the optimality of $A^*$. 
\end{proof}

The fact that there is a set of at most $\min \{k+1,n\}(n-1)$ items that can be reallocated so as to transform any allocation $A^{(i)}$ into an allocation $A^{(i')}$ with $i' \neq i$ results from Proposition~\ref{prop:dimlp} and the following lemma. Note that the following lemma concerns the linear relaxation of~\eqref{prob}, but we can draw conclusions for the allocations $A^{(i)}$ since, as noted above, they are also optimal for this linear relaxation. See Figure~\ref{fig:lemma} for an illustration of the lemma.

\begin{figure}[ht]
\begin{minipage}[c]{0.48\textwidth}
\vspace{0pt}
\centering
\begin{tabular}{@{}l|llllll@{}} \toprule
{}   & $j_1$  & $j_2$ & $j_3$ & $j_4$& $j_5$ & $j_6$ \\ \midrule
agent $1$ & $1$ & $1$ & $1$  & $0$ & $1$ &$0$ \\ 
agent $2$ &  $1$ & $0$ & $0$  & $1$ & $0$ &$1$   \\ 
agent $3$ &  $0$ & $1$ & $1$  & $1$ & $0$ &$0$ \\ \bottomrule
\end{tabular}
\subcaption{Utilities $u_{ij}$ of agents for $6$ items.}
 \end{minipage}
  \hfill
\begin{minipage}[c]{0.48\textwidth}
\vspace{0pt}
\begin{tikzpicture}[
  font=\small,
  v/.style={circle, draw, minimum size=7mm, inner sep=0pt},
  elab/.style={fill=white, inner sep=1.5pt}
]
\node[v] (a1) at (0, 4) {$1$};
\node[v] (a2) at (0, 2) {$2$};
\node[v] (a3) at (0, 0.0) {$3$};

\node[v] (j1) at (3, 4) {$\dot{\jmath}_1$};
\node[v] (j2) at (3, 2) {$\dot{\jmath}_2$};
\node[v] (j3) at (3, 0.0) {$\dot{\jmath}_3$};
\node[v] (j4) at (3, -2) {$\dot{\jmath}_4$};
\node[v] (j5) at (-3, 4) {$\dot{\jmath}_5$};
\node[v] (j6) at (-3, 2) {$\dot{\jmath}_6$};

\draw[thick,dotted] (a1) -- (j1); 
\draw[thick,dotted] (a1) -- (j2); 
\draw[thick,dotted] (a1) -- (j3); 
\draw[ultra thick] (a1) -- (j5); 

\draw[thick] (a2) -- (j1); 
\draw[thick,dotted] (a2) -- (j4); 
\draw[ultra thick] (a2) -- (j6); 

\draw[thick] (a3) -- (j2); 
\draw[thick] (a3) -- (j3); 
\draw[thick] (a3) -- (j4); 


\end{tikzpicture}
\subcaption{An example of an optimal solution $(x_e)_{e\in E}$ of (\hyperref[prob-relax]{$\overline{\text{P}}(t)$}) for the instance illustrated on the left after perturbation. Thick edges represent edges of value $x_e=1$, thin edges represent edges of value $x_e=2/3$, and dotted edges represent edges of value $x_e=1/3$. The set $E_0$ consists of the eight edges $(1,j_1), (1,j_2), (1,j_3), (2,j_1), (2,j_4), (3,j_2), (3,j_3)$, and $(3,j_4)$.
}
\end{minipage}
\caption{Illustration of Lemma~\ref{lem:fixed}. There are $n=3$ agents, $m=6$ items, and $k=1$ category, where each agent must be allocated to exactly two items (i.e., $s_h=2$ for the single category $h$). Each agent has the same weight $t_i=1/3$ for $i=1,2,3$. By Proposition~\ref{prop:dimlp}, for sufficiently small $\varepsilon$, the corresponding (\hyperref[prob-relax]{$\overline{\text{P}}(t)$}) has an optimal face of dimension $2$. 
By Lemma~\ref{lem:fixed}, the number of variables $x_e$ that are fixed to $1$ is at least $m-\min \{k+1,n\}(n-1)=2$. 
}
\label{fig:lemma}
\end{figure}

\begin{lemma}\label{lem:fixed}
  For every $(n-1)$-dimensional face of the set of feasible solutions of~\eqref{prob-relax}, the number of variables $x_e$ that are fixed to $1$ is at least $m-\min \{k+1,n\}(n-1)$.
\end{lemma}

\begin{proof}
    Denote by $E_0$ the set of edges $e$ of $G$ such that $x_{e}$ is not constant on the face, and by $r$ the number of items $j$ incident to $E_0$. 

    For each connected component $K$ of $G_0=(V,E_0)$ with at least one edge, we select an arbitrary spanning tree $T_K$. We denote by $n_K$ the number of vertices of the form $(i,h)$ in $K$ and by $r_K$ the number of items $j$ in $K$. Note that all vertices $(i,h)$ in a connected component $K$ share the same category $h$. 

    Pick an arbitrary point $x$ in the relative interior of the face. We have $0 < x_{e} < 1$ for each edge $e \in E_0$. Any arbitrary small change of an $x_{e}$ with $e \in E_0 \setminus \bigl(\bigcup_K E(T_K)\bigl)$ can be corrected by changing the values of $x_{e'}$ with $e' \in E(T_K)$ for some connected component $K$. 
    Indeed, $T_K$ together with edge $e$ admits a unique cycle $e_1,e_2,\ldots,e_{\ell}$ starting with $e_1=e$, which consists of an even number $\ell$ of edges since $G$ is bipartite. If $x_e$ decreases by $\delta$, we can adjust $x_{e_h}$ by increasing it by $\delta$ for even $h$ and decreasing it by $\delta$ for odd $h$, ensuring that the new $x$ remains within the same face.
    Thus, the dimension of the face is at least $|E_0|-\sum_K|E(T_K)|$. Since the face we consider is $(n-1)$-dimensional, we get
    \begin{equation}\label{eq:dim-forest}
        n-1 \ge |E_0|-\sum_K|E(T_K)| \, .
    \end{equation}

    From this inequality, we will derive two other inequalities. In what follows, we let $\kappa$ denote the number of connected components $K$ of $G_0=(V,E_0)$ with at least one edge. 
   


    Each component $K$ covers $r_K + n_K$ vertices, and hence $|E(T_K)| =r_K + n_K  - 1$.  Combining this with \eqref{eq:dim-forest}, we obtain
    $$
    |E_0| \le n-1 + \sum_{K} (r_K+n_K  - 1) =n-1 + r + \sum_{K} (n_K  - 1). 
    $$
    This is the first inequality. The second one is $\sum_{K}(n_K-1) \le \min \{k(n-1),(n-1)^2\}$. Indeed, from \eqref{eq:dim-forest} we obtain $n-1 \ge \kappa$. This follows from the observation that every non-isolated vertex of $G_0$ has degree at least two (since $x_e$ is fractional for every $e \in E_0$), and thus that no connected component $K$ is a tree. Consequently, $\sum_{K}(n_K-1) \le \kappa (n-1) \le (n-1)^2$. Moreover, we have $\sum_{K}(n_K-1) \le k(n-1)$ since for each category $h$, the sum of the values $n_K-1$ over all connected components $K$ containing vertices $(i,h)$ associated with that category is at most $n-1$. Combining these two inequalities leads to $|E_0| \le (n-1) +r +\min \{k(n-1),(n-1)^2\}= r+ \min \{k+1,n\} (n-1)$.
    On the other hand, $|E_0| \ge 2r$ because, as already noted, the degree of every non-isolated vertex in $G_0$ is at least two. Combining this last inequality with the previous one leads then to $r \le  \min \{k+1,n\} (n-1)$. Since each item $j$ not incident to $E_0$ provides exactly one edge $e$ such that $x_{e}$ is fixed to $1$ on the face, we get the desired conclusion. 
\end{proof}

We provide now the proof in full detail.

\begin{proof}[Proof of Theorem~\ref{thm:main}]
As explained above, we start by choosing a suitable value for the $\varepsilon_{ij}$. We do it with the help of Proposition~\ref{prop:dimlp}, with $c_e^i \coloneqq u_i(j)$ for $e=(i,h)j$ (where $h$ is the category of $j$), with  $M$ being the constraint matrix of (\hyperref[prob-relax]{$\overline{\text{P}}(t)$}), and with $b$ being the right-hand side of this linear program. This provides us with $\varepsilon_{ij} > 0$ for every $i \in [n]$ and every $j \in [m]$. Since these $\varepsilon_{ij}$ can be chosen arbitrary small, we can assume that they are all smaller than $\frac 1 {Kn^2m}$. Note that---and this will be used at the end of the proof---this choice of the parameters makes that the linear program~\eqref{eq:shrink} of Proposition~\ref{prop:dimlp} is exactly (\hyperref[prob-relax]{$\overline{\text{P}}(t)$}) when $t'_i = \frac{1}{K} (1+(K-n)t_i)$ for every $i \in [n]$.


We consider now optimal solutions of~\eqref{prob}---with the $\varepsilon_{ij}$ we have just determined---for $t$ ranging over $\Delta^{n-1}$. We use these optimal solutions to define subsets $C_1,C_2,\ldots,C_n$ of $\Delta^{n-1}$: Let $C_i$ be the set of points $t \in \Delta^{n-1}$ for which there exists an optimal solution of (\hyperref[prob]{P$({t})$}) that is envy-free for agent $i$. We check that these $C_i$ satisfy the conditions of the KKM lemma (Lemma~\ref{lem:KKM}) with $d=n-1$. To see that $C_i$ is closed, take a sequence of points $t^{\ell}=(t^{\ell}_1,t^{\ell}_2,\ldots,t^{\ell}_n)_{\ell=1,2,\ldots}$ in $C_i$ that converges to some ${\bar t}$. For each $\ell$, there exists an optimal solution $A^{\ell}$ of the problem (\hyperref[prob]{P$(t^{\ell})$})
such that $A^{\ell}$ is envy-free for agent $i$. Since there are finitely many allocations, up to taking a subsequence of the original sequence, we can assume that $A^{\ell}$ is the same allocation ${\bar A}$. This allocation ${\bar A}$ is also optimal for (\hyperref[prob]{P$({\bar t})$}) because the objective function is continuous in $t$. By the choice of the $\varepsilon_{ij}$'s, Lemma~\ref{lem:ef} implies that the sets $C_i$ satisfies the covering condition.

By the KKM lemma (Lemma~\ref{lem:KKM}), there exists $t^* \in \Delta^{n-1}$ belonging to all $C_i$ simultaneously. Denote by $A^{(1)},A^{(2)},\ldots,A^{(n)}$ the optimal solutions of (\hyperref[prob]{P$({t^*}$})) that are respectively envy-free for agents $1,2,\ldots,n$. The $\varepsilon_{ij}$'s were chosen according to Proposition~\ref{prop:dimlp}. This implies that the set of optimal solutions of the linear program~\eqref{eq:shrink} with $t'_i= \frac 1 K \bigl(1 + (K- n)t_i^*\bigl)$ for every $i \in [n]$---the other parameters being set at the beginning of the proof---is a face of dimension at most $n-1$. In other words, the set of optimal solutions of (\hyperref[prob-relax]{$\overline{\text{P}}(t^*)$}) is a face of dimension at most $n-1$.
The constraints of (\hyperref[prob-relax]{$\overline{\text{P}}(t^*)$}) being totally unimodular, the allocations $A^{(1)},A^{(2)},\ldots,A^{(n)}$ correspond to optimal solutions of this linear program. Lemma~\ref{lem:fixed} implies then that $m-\min\{k+1,n\}(n-1)$ items are allocated the same way in these allocations, and Lemma~\ref{lem:po} shows that they are all Pareto-optimal. 
\end{proof}

\section{Proof of Theorem~\ref{thm:algo}}\label{sec:algo}


In this section, we prove Theorem~\ref{thm:algo}. We note that the enumeration step in our algorithm is carried out in the polar space (namely, the space of objective functions). This is different from the approach in \cite{barman2025mixedmanna}, where the enumeration is performed in the decision space.
Now we return to the graph $G$ and the linear programming \eqref{prob-relax} defined in Section~\ref{sec:main}.
For every elementary cycle $C$ of $G$, we introduce a hyperplane $H_C$ of $\R^E$, defined as follows. Write $C$ as its sequence $e_1,e_2,\ldots,e_{\ell}$ of edges, with arbitrary first edge. Note that $\ell$ is even as $G$ bipartite. Then $H_C \coloneqq \{y \in \R^E \colon y_{e_1} - y_{e_2} + y_{e_3} - y_{e_4} + \cdots - y_{e_{\ell}} = 0\}$. We consider the arrangement formed by these hyperplanes $H_C$ when $C$ ranges over the set of all elementary cycles of $G$. 

\begin{lemma}\label{lem:costs}
    Consider~\eqref{prob-relax}, but with a general objective function of the form $\sum_{e \in E} c_ex_e$, for any $c\in \R^E$.  Given two costs $c^1,c^2$, if the supporting cell of $c^1$ contains that of $c^2$, then all optimal solutions for $c^1$ are optimal for $c^2$.
\end{lemma}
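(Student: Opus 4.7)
The plan is to deduce the lemma from the following two facts: (i) the supporting-cell hypothesis gives precise sign information on $c^2 \cdot \chi_C$ for every elementary cycle $C$ of $G$, where $\chi_C \in \{-1,0,+1\}^E$ denotes the alternating indicator of $C$ (the vector defining the hyperplane $H_C$); and (ii) any feasible displacement $y-x^*$ decomposes conformally into oriented alternating cycle vectors, each of which is a feasible direction at $x^*$. Combining the two yields, for every feasible $y$, $c^2\cdot(y-x^*)\le 0$, which is the desired optimality.

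For (i), the assumption that the supporting cell of $c^1$ contains that of $c^2$ says precisely that the hyperplanes $H_C$ passing through $c^1$ also pass through $c^2$, and for the remaining cycles $c^1$ and $c^2$ lie on the same side of $H_C$. Concretely, for every elementary cycle $C$, the value $c^2\cdot\chi_C$ is zero whenever $c^1\cdot\chi_C$ is zero, and otherwise has the same strict sign as $c^1\cdot\chi_C$. In particular, $c^1\cdot\chi_C\le 0$ implies $c^2\cdot\chi_C\le 0$, and the same implication holds for the opposite orientation $-\chi_C$.

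For (ii), fix an optimal $x^*$ for $c^1$ and an arbitrary feasible $y$, and set $d:=y-x^*$. Then $Md=0$ (since $x^*$ and $y$ both satisfy the equality constraints of \eqref{prob-relax}) and $d_e\ge 0$ on $Z:=\{e\in E:x^*_e=0\}$ (since $y\ge 0$). I would decompose $d$ by the standard greedy procedure for circulations: starting from any edge with nonzero $d$-value, one exploits the node-balance equations $\sum_{f\in\delta(v)}d_f=0$ to follow a walk alternating between positive and negative $d$-entries until it revisits a vertex; bipartiteness of $G$ forces the closing cycle to have even length, so it produces an elementary cycle $C$ and an orientation $\sigma$ such that $\chi_C^\sigma$ is sign-compatible with $d$ on its support. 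Subtracting $\delta\chi_C^\sigma$ with $\delta=\min_{e\in C}|d_e|$ strictly reduces the support of $d$ while preserving $Md=0$ and $d\ge 0$ on $Z$, so iterating yields a decomposition $d=\sum_i\delta_i\chi_{C_i}^{\sigma_i}$ with $\delta_i>0$ in which every summand is sign-compatible with $d$, hence nonnegative on $Z$, and therefore a feasible direction from $x^*$.

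To conclude, optimality of $x^*$ for $c^1$ forces $c^1\cdot\chi_{C_i}^{\sigma_i}\le 0$ for every $i$, because otherwise a sufficiently small step from $x^*$ along $\chi_{C_i}^{\sigma_i}$ would strictly improve the $c^1$-objective while staying feasible. Fact (i) then transfers this to $c^2\cdot\chi_{C_i}^{\sigma_i}\le 0$, and summing with the positive weights $\delta_i$ gives $c^2\cdot d\le 0$, i.e.\ $c^2\cdot y\le c^2\cdot x^*$. Since $y$ was arbitrary, $x^*$ is optimal for $c^2$. The main obstacle is making the conformal decomposition in (ii) airtight in this specific setting: although the decomposition of circulations into cycles is classical, one must combine the kernel structure of the bipartite node-edge incidence matrix with the sign constraints inherited from $Z$, which is precisely what guarantees that each cycle direction produced is feasible at $x^*$.
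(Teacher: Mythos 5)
Your proof is correct and reaches the same core reduction as the paper---namely, that the sign of $c\cdot\chi_C$ for each alternating cycle vector controls optimality, and that the supporting-cell hypothesis transfers the non-positivity of these quantities from $c^1$ to $c^2$. The difference is in how you establish the cycle-based optimality criterion. The paper invokes Schrijver's Theorem~21.12 on transportation problems as a black box: $x$ is optimal iff the residual digraph $D_x$ has no positive-cost directed cycle, and a directed cycle of $D_{x^*}$ traverses an elementary cycle $C$ of $G$ with an alternating sign pattern, so its $c$-cost is exactly $\pm\,c\cdot\chi_C$. You instead re-derive this criterion from scratch by conformally decomposing the displacement $y-x^*$ (which lies in $\ker M$ and is nonnegative on $Z=\{e:x^*_e=0\}$) into positively weighted alternating cycle vectors, each of which is a feasible direction at $x^*$. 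These are mathematically the same thing---the arcs of $D_{x^*}$ are precisely the admissible increments/decrements, and its directed cycles are precisely the conformal cycle directions---so your argument is a self-contained unpacking of the cited theorem rather than a genuinely new route, which makes the proof longer but reference-free. One small imprecision: from $\mathrm{cell}(c^1)\supseteq\mathrm{cell}(c^2)$ you can conclude $c^1\cdot\chi_C=0\Rightarrow c^2\cdot\chi_C=0$ and $c^1\cdot\chi_C<0\Rightarrow c^2\cdot\chi_C\le 0$, but not that $c^2\cdot\chi_C$ ``has the same strict sign'' as $c^1\cdot\chi_C$ (it could become $0$ when $c^2$ lies on a lower-dimensional face); fortunately you only use the weak implication $c^1\cdot\chi_C\le 0\Rightarrow c^2\cdot\chi_C\le 0$, which does hold, so the slip is harmless.
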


\begin{proof}
    We rely on the approach described by Schrijver for transportation problems~\cite[Chapter 21]{schrijver2003combinatorial}. Given a cost $c$ and a feasible solution $x$, we introduce a directed graph $D_x=(V,A)$ with arc costs, where $V$ is still the vertex set of $G$, but where $A$ is built as follows. For each edge $e=(i,h)j$ of $G$, there is an arc $\bigl((i,h),j\bigl)$ with cost $-c_e$ if $x_e > 0$, and an arc $\bigl(j,(i,h)\bigl)$ with cost $c_e$ if $x_e < 1$. The following characterization holds:
    {\em $x$ is optimal if and only if there is no directed cycle of total positive cost.} (This is Theorem 21.12 in the aforementioned book. Note that in the book it is a minimization problem.) Note that we can ignore directed cycles of length two in that characterization since there are all always of zero cost.

    Assume that the supporting cell of $c^1$ contains that of $c^2$. Consider any optimal solution $x^\star$ for $c^1$. Let us show that $x^\star$ is also optimal for $c^2$. Consider a directed cycle $\vec{C}$ in $D_{x^\star}$ of length at least four. By optimality, its cost with respect to $c^1$ is non-positive. The total cost of the directed cycle writes then as $c_{e_1}^1 - c_{e_2}^1 + \cdots - c_{e_{\ell}}^1$, which means that $c^1$ is located on the non-positive side of $H_C$, where we keep the notation $C$ for the undirected underlying cycle. Since $c^2$ is located on the same side as $c^1$, we have $c^2_{e_1} - c^2_{e_2} + \cdots -  c^2_{e_{\ell}} \le 0$, which means that the total cost of $\vec{C}$ with respect to $c^2$ is non-positive as well. 
\end{proof}

The proof of Theorem~\ref{thm:algo} relies on the following elementary facts from linear algebra, which we state and prove for sake of completeness.

\begin{lemma}\label{lem:inters-aff}
    The intersection of two affine subspaces of $\R^d$ respectively of dimension $d_1$ and $d_2$ is either empty, or of dimension at least $d_1 + d_2 - d$.
\end{lemma}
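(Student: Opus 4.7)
The plan is a standard reduction to the Grassmann dimension formula for linear subspaces. First I would dispose of the case where the intersection is empty, which is allowed by the statement. Otherwise, I would pick an arbitrary point $p$ in the intersection; then the two affine subspaces can be written as $p + V_1$ and $p + V_2$, where $V_1, V_2$ are linear subspaces of $\mathbb{R}^d$ of dimensions $d_1$ and $d_2$ respectively, and their intersection as affine subspaces is precisely $p + (V_1 \cap V_2)$, which has the same dimension as $V_1 \cap V_2$.

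Next I would invoke the Grassmann formula $\dim(V_1 \cap V_2) = \dim(V_1) + \dim(V_2) - \dim(V_1 + V_2)$. Since $V_1 + V_2$ is a linear subspace of $\mathbb{R}^d$, its dimension is at most $d$, giving $\dim(V_1 \cap V_2) \geq d_1 + d_2 - d$. This yields the desired lower bound on the dimension of the affine intersection.

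There is no real obstacle here since every step is classical linear algebra. The only conceptual point to make clean is that an affine subspace translated by one of its points becomes a genuine linear subspace of the same dimension, and that the operation of intersection commutes with such translation when the translation vector lies in both affine subspaces. Everything else is a one-line application of Grassmann's identity.
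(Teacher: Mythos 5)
Your proof is correct and follows the same approach as the paper: translate by a common point to reduce to linear subspaces, apply the Grassmann dimension formula, and bound $\dim(V_1+V_2)$ by $d$. There is no meaningful difference between the two arguments.
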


\begin{proof}
Consider two affine subspaces of $\R^d$, with a non-empty intersection. Pick a point $x$ in the intersection. The two subspaces can then be written under the form $x + F_1$ and $x + F_2$, where $F_1$ and $F_2$ are two linear subspaces of $\R^d$, with $\dim(F_1) = d_1$ and $\dim(F_2) = d_2$. The intersection of the affine subspaces can then be written as $x + F_1 \cap F_2$. By basic linear algebra, we have $\dim(F_1 \cap F_2) = \dim(F_1) + \dim(F_2) - \dim (F_1 + F_2)$. The dimension of $F_1+F_2$ being upper-bounded by $d$, we get that the dimension of $F_1 \cap F_2$ is lower bounded by $d_1+d_2-d$, which implies the desired conclusion. 
\end{proof}

\begin{lemma}\label{lem:subspace-arrangement}
    Consider an affine subspace $F$ determined by the intersection of some hyperplanes in $\R^d$. Then $F$ is already determined by exactly $d-\dim(F)$ of them.
\end{lemma}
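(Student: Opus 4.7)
The plan is to reduce the statement to an elementary fact about kernels of matrices. Write each hyperplane in the intersection as $H_i = \{x \in \R^d : \langle a_i,x\rangle = b_i\}$ with $a_i \in \R^d \setminus \{0\}$ and $b_i \in \R$, for $i$ ranging over some index set $I$, so that $F = \bigcap_{i\in I} H_i$.

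First I would translate everything to the origin. Pick any point $x_0 \in F$ (which exists since $F$ is assumed non-empty, being an affine subspace). Setting $V = \{y \in \R^d : \langle a_i,y\rangle = 0 \text{ for all } i \in I\}$, one has $F = x_0 + V$ and in particular $\dim F = \dim V = d - \mathrm{rank}(\{a_i\}_{i\in I})$ by the rank--nullity theorem applied to the matrix whose rows are the $a_i$. Write $r = \dim F$, so the family $\{a_i\}_{i \in I}$ has rank exactly $d-r$.

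Next I would extract a maximal linearly independent subfamily $\{a_i\}_{i \in J}$ with $|J| = d-r$, and show that $F' \coloneqq \bigcap_{i\in J} H_i$ coincides with $F$. The inclusion $F \subseteq F'$ is immediate since $J \subseteq I$. Conversely, any $a_i$ with $i \in I \setminus J$ is a linear combination of the $a_i$, $i \in J$, so the corresponding linear equation $\langle a_i, y\rangle = 0$ is automatically satisfied on the linear space $V' \coloneqq \{y : \langle a_i, y\rangle = 0 \text{ for all } i \in J\}$; since $x_0 \in F \subseteq F'$, the corresponding affine equation $\langle a_i, x\rangle = b_i$ is satisfied throughout $F' = x_0 + V'$, so $F' \subseteq H_i$. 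Hence $F' \subseteq \bigcap_{i\in I} H_i = F$, giving $F = F'$, and $|J| = d - \dim F$ as required.

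No real obstacle is expected here; the argument is a direct consequence of rank--nullity and the translation trick. The only point to handle with minor care is the initial choice of $x_0$ (non-emptiness of $F$, which is part of the definition of affine subspace in the paper's convention) and the observation that equations redundant in the homogeneous system $\{\langle a_i, y\rangle = 0\}_{i\in J}$ remain redundant in the inhomogeneous system once a common solution $x_0$ is fixed.
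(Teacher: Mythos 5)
Your proof is correct, but it takes a genuinely different route from the paper's. The paper proceeds by descending induction on $\dim(F)$, bootstrapping from the hyperplane case and repeatedly invoking Lemma~\ref{lem:inters-aff} to control how the dimension drops as hyperplanes are added one at a time. You instead give a direct linear-algebra argument: fix $x_0 \in F$, translate to the homogeneous system, apply rank--nullity to identify $d - \dim F$ as the rank of the normal vectors, and extract a maximal linearly independent subfamily; the observation that inhomogeneous redundancy follows from homogeneous redundancy once a common solution $x_0$ is available closes the reverse inclusion. Your version is more self-contained and arguably cleaner (no induction, no auxiliary lemma), whereas the paper's version is written to reuse Lemma~\ref{lem:inters-aff}, which it has already proved for other purposes in the proof of Theorem~\ref{thm:algo}. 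Either argument is sound; both rely on non-emptiness of $F$, which holds in the intended application (vertices of a hyperplane arrangement).
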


\begin{proof}
    Let $H_1, H_2, \ldots, H_r$ be hyperplanes such that $F = \bigcap_{\ell=1}^r H_{\ell}$. We prove now by descending induction on $\dim(F) = d'$ that there is a subset $S\subseteq [r]$ of cardinality $d-d'$ such that $F=\bigcap_{\ell \in S}H_{\ell}$.

    Suppose first that $\dim(F) =d-1$. In this case, $F$ is a hyperplane. It must be contained in every $H_{\ell}$ and for dimensional reason coincides actually with every $H_{\ell}$. Any $H_{\ell}$ already determines $F$. Suppose now that the result is true for some $d' \le d-1$. Assume that $\dim(F)=d'-1$. Let $r' \le r$ be the minimal integer such that $F = \bigcap_{\ell = 1}^{r'} H_{\ell}$. By Lemma~\ref{lem:inters-aff} with $F_1 = \bigcap_{\ell = 1}^{r'-1} H_{\ell}$ and $F_2 = H_{r'}$, the dimension of $\bigcap_{\ell = 1}^{r'-1} H_{\ell}$ is at most $d'$. By the minimality of $r'$, it is at least $d'$, and so it is exactly $d'$. By induction, there is a subset $T \subseteq [r'-1]$ of cardinality $d-d'$ that $H_1 \cap \cdots \cap H_{r'-1} = \bigcap_{\ell \in T}H_{\ell}$. Setting $S = T \cup \{r'\}$ allows to conclude the claim. 
\end{proof}

\begin{proof}[Proof of Theorem~\ref{thm:algo}]
For each $i \in [n-1]$ and $e \in E$, let $c^i_e = u_i(j)$ if $e$ corresponds to an edge $(i,h)j$. Up to multiplying the $u_i(j)$ by some integer, we can assume that the $c_e^i$ are integers, and we can thus define some $\varepsilon$ according to Proposition~\ref{prop:comput-eps}, with $M$ being the constraint matrix of (\hyperref[prob-relax]{$\overline{\text{P}}(t)$}), $b$ its right-hand side, and $\alpha = \frac 1 {Kn^2m}$.

Let $Q$ be the set of vectors $(c_e(t))_{e \in E}$ with $c_e(t) = 
\frac 1 K \bigl(1 + (K- n)t_i\bigl)c^i_e + \varepsilon_e$ (still with $e=(i,h)j$) and $t$ in the $(n-1)$-dimensional standard simplex $\Delta^{n-1}$. Note that these vectors are exactly the cost vectors of (\hyperref[prob-relax]{$\overline{\text{P}}(t)$}). The hyperplanes $H_C$ determine an arrangement within the polytope $Q$: the intersection of any $H_C$ with $Q$ is of dimension $\dim(Q)$ or $\dim(Q) - 1$ (by Lemma~\ref{lem:inters-aff}); keep only those for which the intersection is of dimension $\dim(Q)-1$; these intersections are then hyperplanes of the affine hull of $Q$.
 
Consider now a vertex $(c_e(t))_{e \in E}$ of this arrangement. Remark defining $t'_i \coloneqq \frac 1 K \bigl(1 + (K- n)t_i\bigl)$ makes that $t'$ belongs to the $(n-1)$-dimensional simplex $\Delta^{n-1}$.  Proposition~\ref{prop:comput-eps} and Lemma~\ref{lem:fixed} together show that for the corresponding optimal face, there are at least $m-\min\{k+1,n\}(n-1)$ items $j$ for which the value of $x_e$, with $e$ incident to $j$, is fixed to $1$. We determine these edges by fixing, for each edge $e$ independently, $x_e$ respectively to $0$ and $1$, and see whether the optimal value remains the same by solving the corresponding linear program (which can be done in polynomial time). If the optimal value remains the same, then $x_e$ is not fixed on the optimal face. Otherwise, it is fixed. For each vertex of the arrangement, there are thus at most $\min\{k+1,n\}(n-1)$ items that are not allocated in a unique way and we try all possible allocations of these items by brute-force. According to Lemma~\ref{lem:costs}, we do not lose any optimal solution, and in particular, we will consider the optimal solutions $A^{(1)},A^{(2)},\ldots,A^{(n)}$ of (\hyperref[prob]{P$({t^*}$})) in the proof of Theorem~\ref{thm:main}, which are Pareto-optimal and envy-free respectively for agents $1,2,\ldots,n$. For each vertex of the arrangement, this makes $O(n^{\min\{k+1,n\}(n-1)})$ possible integral optimal allocations to check. 
    
By Lemma~\ref{lem:subspace-arrangement}, each vertex is determined by at most $n-1$ hyperplanes of the affine hull of $Q$. (Note that such a vertex can be located on a face of $Q$, but then Lemma~\ref{lem:subspace-arrangement} shows that less than $n-1$ hyperplanes determine the vertex.) Each vertex is thus determined by at most $n-1$ hyperplanes $H_C$. Since there are $O(m^{2n})$ cycles $C$, this makes $O(m^{2n(n-1)})$ vertices, and we get the desired complexity.
\end{proof}

\section{Concluding remarks}
Theorem~\ref{thm:main} guarantees that envy-freeness for each agent can be achieved  after reallocating at most $\min\{k+1,n\}(n-1)$ items. 
An open question is whether this bound can be improved, allowing for envy-freeness with fewer reallocations. We note in Proposition~\ref{prop:lowerbound} below that the lower bound is $n$.
Additionally, in Theorem~\ref{thm:algo}, we assume that the number $n$ of agents is a constant. It would be interesting to explore whether a more efficient algorithm exists in terms of computational complexity.

\begin{proposition}\label{prop:lowerbound}
For every number $n$ of agents, there exists an instance such that, for every Pareto-optimal allocation, there is an agent for whom at least $n$ items must be reallocated in order to obtain another Pareto-optimal allocation in which that agent is not envious.
\end{proposition}
\begin{proof}
Consider an instance with $n$ agents $1,2,\ldots,n$, $n$ items $j_1,j_2,\ldots,j_n$, and a single category $h$ with $s_h=1$. For each agent $i=2,3,\ldots,n-1$, agent $i$ has utility $1$ for items $j_i$ and $j_{i+1}$, and utility $0$ for all other items. Agent $1$ has utility $10$ for $j_1$, utility $1$ for $j_2$, and utility $0$ for all other items. Agent $n$ has utility $10$ for $j_1$, utility $1$ for $j_n$, and utility $0$ for all other items.

Let $A$ denote the allocation that assigns item $j_i$ to agent $i$ for each $i=1,2,\ldots,n$, and let $B$ denote the allocation that assigns item $j_{i+1}$ to agent $i$ for each $i=1,2,\ldots,n$, where $j_{n+1}=j_1$.

We first show that in any Pareto-optimal allocation, every agent must receive positive utility. Suppose there is a feasible allocation $X$ in which some agent receives utility $0$. Since each agent receives at most one item and there are exactly $n$ items and $n$ agents, every feasible allocation must in fact assign every agent to exactly one item. If agent $n$ does not receive item $j_1$ in $X$, then under $A$, agent $1$ receives utility $10$, agent $n$ receives utility $1$, and every agent $i=2,\ldots,n-1$ receives utility $1$. Hence every agent weakly prefers $A$ to $X$, and the agent with utility $0$ in $X$ is strictly better off under $A$. Thus $A$ Pareto-dominates $X$. On the other hand, if agent $n$ does receive item $j_1$ in $X$, then under $B$, agent $n$ receives utility $10$, agent $1$ receives utility $1$, and every agent $i=2,\ldots,n-1$ receives utility $1$. Again, every agent weakly prefers $B$ to $X$, and the agent with utility $0$ in $X$ is strictly better off under $B$. Thus $B$ Pareto-dominates $X$.
Therefore, every Pareto-optimal allocation gives positive utility to every agent. Since every feasible allocation must assign every agent to exactly one item, there are only two Pareto-optimal allocations, $A$ and $B$.

Finally, in allocation $A$, agent $n$ receives utility $1$ from $j_n$, while agent $1$ receives $j_1$, which agent $n$ values at $10$. Thus agent $n$ envies agent $1$. Eliminating this envy while preserving Pareto optimality requires moving from $A$ to $B$, because $A$ and $B$ are the only Pareto-optimal allocations. Since $A$ and $B$ differ on all $n$ items, this requires reallocating all $n$ items.
\end{proof}

\section*{Acknowledgments}
This work was partially supported by JST FOREST Grant Numbers JPMJFR226O.

\bibliographystyle{plainnat}

\end{document}